\def\BibTeX{{\rm B\kern-.05em{\sc i\kern-.025em b}\kern-.08em
    T\kern-.1667em\lower.7ex\hbox{E}\kern-.125emX}}
\newtheorem{theorem}{Theorem}
\newtheorem{lemma}{Lemma}
\newtheorem{definition}{Definition}
\newcommand{\seta}{\ensuremath{\mathcal{A}}}
\newcommand{\setd}{\ensuremath{\mathcal{D}}}
\newcommand{\bs}[1]{\boldsymbol{#1}}
\newcommand{\RM}[1]{\MakeLowercase{\romannumeral #1{}}}
\definecolor{calpolypomonagreen}{rgb}{0.12, 0.3, 0.17}
\newcounter{remarkcount}
\newenvironment{remark}{\refstepcounter{remarkcount}\begin{trivlist}\item \textbf{Remark \theremarkcount.}}{\end{trivlist}}
\newcommand{\circlearrow}{}
\DeclareRobustCommand{\circlearrow}{%
  \mathrel{\vphantom{\rightarrow}\mathpalette\circle@arrow\relax}%
}
\newcommand{\circle@arrow}[2]{%
  \m@th
  \ooalign{%
    \hidewidth$#1\circ\mkern1mu$\hidewidth\cr
    $#1-$\cr}%
}
\let\emptyset\varnothing
\newcommand{\mbb}{\mathbb}
\theoremstyle{definition}
\theoremstyle{remark}
\begin{document}
\title{Common Randomness Generation over Slow Fading Channels} 

\author{
\IEEEauthorblockN{Rami Ezzine,\IEEEauthorrefmark{1}  Moritz Wiese,\IEEEauthorrefmark{1}\IEEEauthorrefmark{3} Christian Deppe \IEEEauthorrefmark{2} and Holger Boche \IEEEauthorrefmark{1}\IEEEauthorrefmark{3}\IEEEauthorrefmark{4}}
\IEEEauthorblockA{\IEEEauthorrefmark{1}Technical University of Munich, Chair of Theoretical Information Technology, Munich, Germany\\
\IEEEauthorrefmark{2}Technical University of Munich, Institute for Communications Engineering,  Munich, Germany\\
\IEEEauthorrefmark{3}CASA -- Cyber Security in the Age of Large-Scale Adversaries–
Exzellenzcluster, Ruhr-Universit\"at Bochum, Germany\\
\IEEEauthorrefmark{4}Munich Center for Quantum Science and Technology (MCQST), Schellingstr. 4, 80799 Munich, Germany\\
Email: \{rami.ezzine, wiese, christian.deppe, boche\}@tum.de}
}

\maketitle

\begin{abstract}
  This paper analyzes the problem of common randomness (CR) generation from correlated discrete sources aided by unidirectional communication over Single-Input Single-Output (SISO) slow fading channels with additive white Gaussian noise (AWGN) and arbitrary state distribution. Slow fading channels are practically relevant for wireless communications.
  We completely solve the SISO slow fading case by establishing its corresponding outage CR capacity using our characterization of its channel outage capacity. 
  The generated CR could be exploited to improve the performance gain in the identification scheme. The latter is known to be more efficient than the classical transmission scheme in many new applications, which demand ultra-reliable low latency communication.

\end{abstract}

\begin{IEEEkeywords}
Common randomness, slow fading, outage capacity.
\end{IEEEkeywords}
\section{Introduction}
Common randomness (CR) of two terminals refers to a random variable observable to both, with low error probability. In many models, one terminal corresponds to the sender station and the other corresponds to the receiver station. The availability of this CR  allows to implement correlated
random protocols leading to developing potentially faster and more efficient
algorithms \cite{corrsurvey}.
CR generation plays a major role in  sequential secret key generation   \cite{secretkey}. In the context of secret key generation, CR is usually denoted by Information reconciliation. CR is also highly relevant in the identification scheme, an approach in communications
developed by Ahlswede and Dueck \cite{Idchannels} in 1989. In the identification framework,
the encoder sends an identification message also called identity
over the channel and the decoder is not interested in what
the received message is. He wants to know if a specific message of special interest to him has
been sent or not. The identification scheme is better suited than the classical transmission scheme for
many new applications with high requirements on reliability and latency. These applications include several machine-to-machine
and human-to-machine systems \cite{application}, the tactile internet
\cite{Tactilesinternet}, digital watermarking \cite{MOULINwatermarking,AhlswedeWatermarking,SteinbergWatermarking}, industry 4.0 \cite{industry4.0}, molecular communications \cite{macromolecular1}\cite{macromolecular2}, etc. \color{black}Furthermore,\cite{PatentBA} describes an interesting application where identification codes \cite{implementation}  can be used in autonomous driving. This is a typical use case for ultra-reliable low latency communication\color{black}.
Interestingly, it has been established that the resource CR can increase the identification capacity of channels\cite{part2,Generaltheory,CRincrease}.
Thus, by taking advantage of the resource CR,  an enormous performance gain can be achieved in the identification task.

The problem of CR generation was initially introduced in \cite{part2}, where unlike in the fundamental significant papers \cite{part1}\cite{maurer}, no secrecy requirements are imposed. In particular, the CR capacity of a model involving two correlated discrete sources with one-way communication over noiseless channels with limited capacity was established. The CR capacity is defined to be the maximum rate of CR generated by two terminals using the resources available in the model\cite{part2}.

Recently, the results on CR capacity have been extended in \cite{globecom} to  SISO and point-to-point Multiple-Input Multiple-Output (MIMO) Gaussian channels, which are practically relevant in many communication situations including satellite and deep space
communication links \cite{mobile}, wired and wireless communications, etc. The results on CR capacity over Gaussian channels have been used to establish a lower-bound on its corresponding correlation-assisted  secure identification capacity in the log-log scale \cite{globecom}. This lower bound can already exceed the secure identification capacity over Gaussian channels with randomized encoding elaborated in \cite{wafapaper}. 

However, to the best of our knowledge, there are no results on the CR generation problem over fading channels. The generated CR can be exploited in the problem of correlation-assisted identification over fading channels, which is, as far as we know, an open problem.
The phenomenon of fading is one of the fundamental aspects in wireless communication. 
Fading refers to the deviation of a signal attenuation during wireless propagation with different variables such as time, rainfall, radio frequency.
A common model for wireless communication is the fading channel model with additive white Gaussian noise (AWGN) \cite{goldsmith,Tse,fadingmodels,mob2,overview}.
In our work, the focus will be on SISO slow fading channels with AWGN and with arbitrary state distribution. In the slow fading scenario, the channel state is random but remains constant over the time-scale of transmission. The event of major interest here is outage. This arises when the channel state is so poor that no coding scheme is able to establish reliable communication at a certain target rate.
 We consider, as a capacity measure of the slow fading channel, the $\eta$-outage capacity defined to be the largest rate at which one can reliably communicate over the channel with probability greater or equal to $1-\eta$ \cite{goldsmith}\cite{Tse}. To the best of our knowledge, no rigorous proof of the outage capacity for arbitrary state distribution is provided in the literature.
 
\color{black}The main contribution of this paper consists in establishing first the $\eta$-outage capacity of SISO slow fading channels with AWGN and with arbitrary state distribution. Then, we extend the concept of outage to the CR generation problem over the slow fading channel by deriving a single-letter characterization of its corresponding $\eta$-outage CR capacity using our characterization of its $\eta$-outage capacity. In the CR generation framework, outage occurs when the channel state is so poor that the terminals cannot agree on a common random variable with high probability. The $\eta$-outage CR capacity is defined to be the maximum rate of CR generated by the terminals using the resources available in the model such that the outage probability does not exceed $\eta.$\color{black}

\textit{Paper outline:} The paper is organized as follows. In Section \ref{sec2}, we present our system model, provide the key definitions and present the main results. The $\eta$-outage capacity is  established in Section \ref{sec3}.  A  rigorous proof of the $\eta$-outage 
CR capacity is provided in Section \ref{sec4}.
The conclusion contains concluding remarks.

\textit{Notation:} $\mathbb{C}$ denotes the set of complex numbers and $\mbb R$ denotes the set of real numbers; $H(\cdot)$ and $h(\cdot)$  correspond to the  entropy of discrete and continuous random variables, respectively; $I(\cdot;\cdot)$ denotes the mutual information between two random variables; $|\mathcal{K}|$ stands for the cardinality of the set $\mathcal{K}$ and $\mathcal{T}_{U}^{n}$ denotes the set of typical sequences of length $n$ and of type $P_{U}$. For any random variables $X$, $Y$ and $Z$, we use the operator $\color{black}X \circlearrow{Y} \circlearrow{Z}\color{black}$ to indicate a Markov chain. Throughout the paper, $\log$ and $\exp$ are to the base 2.
\section{System Model, Definitions and Main Results}
\label{sec2}
\subsection{System Model}
\label{systemmodel}
Let a discrete memoryless multiple source $P_{XY}$ with two components, with  generic variables $X$ and $Y$ on alphabets $\mathcal{X}$ and $\mathcal{Y}$, correspondingly, be given.
The outputs of $X$ are observed by Terminal $A$ and those of $Y$ by Terminal $B$. Both outputs have length $n.$ Terminal $A$
can send information to Terminal $B$ over the following slow fading channel $W_{G}$:
\begin{align}
z_{i}=Gt_{i}+\xi_{i} \quad i=1\hdots n.  \nonumber
\label{SISOchannelmodel}
\end{align}
where $t^n=(t_1,\hdots,t_n)\in\mbb C^n$ and $z^n=(z_1,\hdots,z_n)\in \mbb C^n$ are channel input and output blocks, respectively.
$G$ models the complex gain, where we assume that both terminals $A$ and $B$ know the distribution of the gain $G$ only. $\xi^n=(\xi_1,\hdots,\xi_n)\in \mathbb{C}^n$ models the noise sequence.
We assume that the $\xi_{i}s$ are i.i.d, where $\xi_{i} \sim \mathcal{N}_{\mathbb{C}}\left(0,\sigma^2\right),\ i=1\hdots n.$ We further assume that $G$ and $\xi^n$ are mutually independent and that $(G,\xi^n)$ are independent of $X^n$,$Y^n$. There are no other resources available to both terminals.

A CR-generation protocol of block-length $n$ consists of:
\begin{enumerate}
    \item A function $\Phi$ that maps $X^n$ into a random variable $K$ with alphabet $\mathcal{K}$ generated by Terminal $A.$
    \item A function $f$ that maps $X^n$ into some message $\ell=f(X^n).$

    \item A channel code $\Gamma$ of length $n$ for the channel $W_G$ as defined in Definition \ref{defcode}, where each codeword $\bs{t}_\ell=(t_{\ell,1},\hdots,t_{\ell,n})$  satisfies the following power constraint:
    \begin{equation}
\frac{1}{n}\sum_{i=1}^{n}t_{\ell,i}^2\leq P.   \ \
\label{energyconstraintSISOCorrelated}
\end{equation}
The random channel input sequence depending on $X^n$ is denoted by $T^n.$
    \item A function $\Lambda$ that maps $Y^n$ and the decoded message into a random variable $L$ with alphabet $\mathcal{K}$ generated by Terminal $B.$
\end{enumerate}
Such protocol induces a pair of random variable $(K,L)$ that is called permissible, where it holds for some function $\Psi$, for $D$ being the channel decoder of $\Gamma$ and for $Z^n$ being the random channel output sequence that
\begin{equation}
    K=\Phi(X^{n}), \ \     L=\Psi(Y^{n},Z^{n})=\Lambda(Y^n,D(Z^n)).
    \label{KLSISOcorrelated}
\end{equation}
This is illustrated in Fig. \ref{correlatedSISO}.
\begin{figure}[!htb]
\centering
\tikzstyle{block} = [draw, rectangle, rounded corners,
minimum height=2em, minimum width=2cm]
\tikzstyle{blockchannel} = [draw, top color=white, bottom color=white!80!gray, rectangle, rounded corners,
minimum height=1cm, minimum width=.3cm]
\tikzstyle{input} = [coordinate]
\usetikzlibrary{arrows}
\begin{tikzpicture}[scale= 1,font=\footnotesize]
\node[blockchannel] (source) {$P_{XY}$};
\node[blockchannel, below=2.5cm of source](channel) { Slow Fading channel};
\node[block, below left=1cm of source] (x) {Terminal $A$};
\node[block, below right=1cm of source] (y) {Terminal $B$};
\node[above=1cm of x] (k) {$K=\Phi(X^n)$};
\node[above=1cm of y] (l) {$L=\Psi(Y^n,Z^n)$};

\draw[->,thick] (source) -- node[above] {$X^n$} (x);
\draw[->, thick] (source) -- node[above] {$Y^n$} (y);
\draw [->, thick] (x) |- node[below right] {$T^n$} (channel);
\draw[<-, thick] (y) |- node[below left] {$Z^n$} (channel);
\draw[->] (x) -- (k);
\draw[->] (y) -- (l);

\end{tikzpicture}
\caption{Two-correlated source model with one-way communication over a  SISO slow fading channel}
\label{correlatedSISO}
\end{figure}
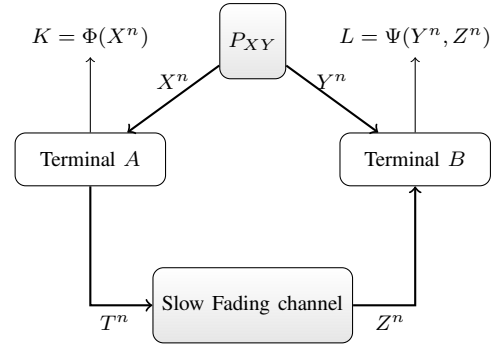
\subsection{Rates and Capacities}
We define first an achievable $\eta$-outage CR rate and the $\eta$-outage CR capacity. This is an extension to the definition of an achievable CR rate and of the CR capacity introduced in \cite{part2}.
\begin{definition} 
Fix a non-negative constant $\eta<1.$ A number $H$ is called an achievable $\eta$-outage common randomness rate  if there exists a non-negative constant $c$ such that for every $\alpha>0$ and $\delta>0$ and for sufficiently large $n$ there exists a permissible   pair of random variables $(K,L)$ such that
\begin{equation}
    \mbb P\left[\mbb P\left[K\neq L|G\right]\leq \alpha \right]\geq 1-\eta, 
    \label{errorSISOcorrelated}
\end{equation}
\begin{equation}
    |\mathcal{K}|\leq \exp(cn),
    \label{cardinalitySISOcorrelated}
\end{equation}
\begin{equation}
    \frac{1}{n}H(K)> H-\delta.
     \label{rateSISOcorrelated}
\end{equation}
\end{definition}
\begin{remark}
Together with \eqref{errorSISOcorrelated}, the technical condition  \eqref{cardinalitySISOcorrelated} ensures that for every $\epsilon>0$ and sufficiently large blocklength $n$ the set
$$\seta = \bigg\{ g \in \mbb C: \bigg| \frac{H(K|G=g)}{n}-\frac{H(L|G=g)}{n} \bigg| < \epsilon \bigg\}$$
satisfies $\mbb P\left[\seta\right]\geq 1-\eta.$ This follows from the analogous statement in\cite{part2}.
\end{remark}
\begin{definition} 
The $\eta$-outage common randomness capacity $C_{\eta,CR}(P)$ is the maximum achievable $\eta$-outage common randomness rate.
\end{definition}
Next, we define an achievable $\eta$-outage rate for the slow fading channel $W_G$ and the corresponding $\eta$-outage capacity.
For this purpose, we begin by providing the definition of a  transmission-code for  $W_G.$
\begin{definition}
\label{defcode}
A transmission-code $\Gamma$ of length $n$ and size $\lvert \Gamma \rvert$ for the channel $W_G$ is a family of pairs $\left\{(\bs{t}_\ell,\setd_\ell),\quad \ell=1,\ldots,\lvert \Gamma \rvert \right\}$ such that for all $\ell,j \in \{1,\ldots,\lvert \Gamma \rvert\}$, we have:
\begin{align}
& \bs{t}_\ell \in \mbb C^n,\quad \setd_\ell \subset \mbb C^n, \nonumber \\
&\frac{1}{n}\sum_{i=1}^{n}t_{\ell,i}^2\leq P \  \forall \ \bs{t}_\ell \in\mbb C^n,\ \bs{t}_\ell=(t_{\ell,1},\hdots,t_{\ell,n}), \nonumber \\
&\setd_\ell \cap \setd_j= \emptyset,\quad \ell \neq j. \nonumber
\end{align}
Here, $\bs{t}_\ell, \ \ell=1,\ldots,\lvert \Gamma \rvert$  and $\setd_\ell, \  \ell=1,\ldots,\lvert \Gamma \rvert,$ are the codewords and the decoding regions, respectively.
The maximum error probability is a random variable depending on $G$ and it is expressed as follows:
\begin{align}
    e(\Gamma,G)=\underset{\ell \in \{1,\ldots,\lvert \Gamma \rvert\}}{\max}W_{G}(\setd_\ell^c|\bs{t}_\ell). \nonumber
\end{align}
\end{definition}
\begin{remark}
Throughout the paper, we consider the maximum error probability criterion.
\end{remark}
\begin{definition}
    Let $0 \leq \eta<1$. A real number $R$ is called an \textit{achievable} $\eta$-\textit{outage rate} of the channel $W_G$ if for every $\theta,\delta>0$ there exists a code sequence $(\Gamma_n)_{n=1}^\infty$  such that
    \[
        \frac{\log\lvert \Gamma_n\rvert}{n}\geq R-\delta
    \]
    and
    \[
        \mbb P[e(\Gamma_n,G)\leq\theta]\geq 1-\eta
    \]
    for sufficiently large $n$.
\end{definition}
\begin{definition}
The supremum of all achievable $\eta$-outage rates is called the $\eta$-\textit{outage capacity} of the channel $W_{G}$ and is denoted by $C_\eta$.
\end{definition}
\subsection{Main Results}
In this section, we propose a single-letter characterization of the $\eta$-outage channel capacity in Theorem \ref{cetathm} and of the $\eta$-outage CR capacity in Theorem \ref{ccretathm}. Theorem \ref{cetathm} and Theorem \ref{ccretathm} are proved in Section \ref{sec3} and Section \ref{sec4}, respectively.
\begin{theorem}
    Let  $\gamma_0=\sup\{\gamma:\mbb P[\lvert G\rvert<\gamma]\leq\eta\}.$
    The $\eta$-outage capacity of the channel $W_{G}$ is equal to
 
    \[
        C_\eta(P)=\log\left(1+\frac{P\gamma_0^2}{\sigma^2}\right).
    \]
    \label{cetathm}
\end{theorem}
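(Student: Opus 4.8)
The plan is to read $W_G$, for each frozen realization $G=g$, as an ordinary complex AWGN channel whose capacity under the power budget $P$ is $C(g):=\log(1+P|g|^2/\sigma^2)$, a quantity depending on $g$ only through $|g|$ and continuous and strictly increasing in $|g|$. Because $\gamma_0=\sup\{\gamma:\mbb P[|G|<\gamma]\le\eta\}$ is exactly the largest radius whose outer region $\{|g|\ge\gamma\}$ still carries probability at least $1-\eta$, matching the outage requirement $\mbb P[e(\Gamma_n,G)\le\theta]\ge1-\eta$ against the per-state capacity $C(g)$ should pin $C_\eta(P)$ to $C(\gamma_0)$. I would prove the two bounds $C_\eta(P)\le C(\gamma_0)$ and $C_\eta(P)\ge C(\gamma_0)$ separately.

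\textbf{Converse.} Fix an achievable $R$ and suppose for contradiction that $R>C(\gamma_0)$. Using continuity of $C(\cdot)$ and the definition of $\gamma_0$ as a supremum, pick $\gamma_2>\gamma_0$ and $\delta>0$ with $C(\gamma_2)<R-\delta$; then $\mbb P[|G|<\gamma_2]>\eta$. For any code $\Gamma_n$ with $\log|\Gamma_n|/n\ge R-\delta$ and any $g$ with $|g|\le\gamma_2$, apply Fano's inequality to a uniformly chosen message together with the single-letter bound $I(T^n;Z^n)\le nC(g)$, which holds for every phase of $g$ since $C(g)$ sees only $|g|$. This yields the uniform estimate $\bar e(\Gamma_n,g)\ge 1-(C(\gamma_2)+1/n)/(R-\delta)$ on the average and therefore on the maximum error probability. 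Choosing $\theta$ strictly below the positive constant $1-C(\gamma_2)/(R-\delta)$ forces $e(\Gamma_n,g)>\theta$ for all $|g|\le\gamma_2$ and all large $n$, so $\{e(\Gamma_n,G)\le\theta\}\subseteq\{|G|>\gamma_2\}$ and $\mbb P[e(\Gamma_n,G)\le\theta]\le\mbb P[|G|>\gamma_2]<1-\eta$, contradicting achievability. Hence $R\le C(\gamma_0)$. Only the weak converse is needed here, so this direction is routine.

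\textbf{Achievability.} Fix $R<C(\gamma_0)$ and use continuity to choose $\gamma_1<\gamma_0$ with $C(\gamma_1)>R$; by monotonicity of $\gamma\mapsto\mbb P[|G|<\gamma]$ and the definition of $\gamma_0$ one then has $\mbb P[|G|<\gamma_1]\le\eta$. The goal is a single code of rate $R$ that is reliable simultaneously for every $g$ in the compound set $\{g:|g|\ge\gamma_1\}$; reliability over this whole set gives at once $\mbb P[e(\Gamma_n,G)\le\theta]\ge\mbb P[|G|\ge\gamma_1]\ge1-\eta$. Since the decoding regions $\setd_\ell$ are not allowed to depend on $g$, this is a genuine compound-channel coding problem over an unbounded, continuous state set.

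The reduction and the universal decoder are the crux and the step I expect to cost the most. By monotonicity of $C(g)$ in $|g|$ the binding state is $|g|=\gamma_1$, and the continuum of phases $\arg g\in[0,2\pi)$ can be replaced by a finite grid fine enough that the induced capacity loss is negligible, reducing the problem to a finite compound channel to which the compound-channel coding theorem applies with a universal (maximum-mutual-information or minimum-distance) decoder. Equivalently, I could prepend a pilot block of sublinear length that lets the receiver estimate $g$ and then decode with a standard capacity-achieving AWGN code for $C(\gamma_1)>R$; the pilot costs vanishing rate and still defines a fixed, state-independent decoding rule, so it fits the code definition. Carefully establishing vanishing maximum error uniformly over all $|g|\ge\gamma_1$, in particular controlling large $|g|$ and the channel-estimation error, is the part that requires real work, whereas the outage bookkeeping and the converse are comparatively immediate.
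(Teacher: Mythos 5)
Your overall strategy and your converse are sound, but your achievability argument takes a genuinely different route from the paper's, and the difference is substantive. For the converse, the paper argues by contradiction: it picks $\delta$ so that the code rate exceeds $\log(1+P\gamma_1^2/\sigma^2)$ for some $\gamma_1>\gamma_0$, asserts that the error probability then exceeds $\theta$ whenever $|G|<\gamma_1$, and concludes $\mbb P[e(\Gamma_n,G)>\theta]\geq \mbb P[|G|<\gamma_1]>\eta$; your per-state Fano estimate $e(\Gamma_n,g)\geq 1-(C(\gamma_2)+1/n)/(R-\delta)$ makes the same step explicit and is, if anything, cleaner, since it avoids the paper's appeal to degradedness to compare error probabilities across states. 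The real divergence is in the direct part. The paper takes a capacity-achieving code for the single fixed gain $\gamma_0$ (using Lemma~\ref{lem:leftcontinuity} to ensure $\mbb P[|G|<\gamma_0]\leq\eta$, so it can code at $\gamma_0$ itself rather than at some $\gamma_1<\gamma_0$ as you do) and invokes ``degradedness of the Gaussian channels'' to claim the same code, with the same decoding regions, has error at most $\theta$ for every $g$ with $|g|\geq\gamma_0$. You instead observe that since the decoding regions $\setd_\ell$ are fixed subsets of $\mbb C^n$ and the terminals know only the distribution of $G$, one needs a single decoder that is simultaneously reliable over the compound set $\{g:|g|\geq\gamma_1\}$, and you propose a universal decoder or a sublinear pilot phase. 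Your caution here is warranted: stochastic degradedness only guarantees that a decoder \emph{adapted to} $g$ (prepend the degrading map, which depends on $g$, then decode as for $\gamma_0$) performs well, and a fixed set of decoding regions tuned to gain $\gamma_0$ need not have monotone error in $|g|$ --- an unknown phase rotation alone can ruin it. So your approach buys rigor exactly where the paper's proof is thinnest, at the price of having to actually carry out the compound-channel/pilot construction with uniform error control over an unbounded state set, which you correctly flag as the remaining work rather than complete. As a proof plan it is correct; as written it still owes that construction.
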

\begin{theorem}
For the model described in Section \ref{systemmodel}, the $\eta$-outage CR capacity is equal to
\begin{align}
C_{\eta,CR}(P) = 
  \underset{ \substack{U \\{\substack{U \circlearrow{X} \circlearrow{Y}\\ I(U;X)-I(U;Y) \leq C_{\eta}(P)}}}}{\max} I(U;X).  \nonumber
\end{align}
\label{ccretathm}
\end{theorem}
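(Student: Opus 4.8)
The plan is to show that, as far as CR generation is concerned, the slow fading channel behaves exactly like a noiseless bit pipe of capacity $C_\eta(P)$ that is available on a $(1-\eta)$-fraction of the channel states, so that the claimed formula is the Ahlswede--Csisz\'ar common randomness capacity \cite{part2} with the link capacity replaced by $C_\eta(P)$. Accordingly I would prove the two inequalities separately: a direct part exhibiting a permissible pair $(K,L)$ of rate $I(U;X)$ for every admissible $U$, and a converse part showing that no permissible pair can beat the maximum. Throughout I write $C(g)=\log(1+P|g|^2/\sigma^2)$ for the instantaneous capacity, so that $C(g)\ge C_\eta(P)$ exactly when $|g|\ge\gamma_0$, and I treat Theorem \ref{cetathm} together with the single-letterization machinery of \cite{part2} as given.

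For achievability, fix $U$ with $U\circlearrow{X}\circlearrow{Y}$ and $I(U;X)-I(U;Y)<C_\eta(P)$, and set $H=I(U;X)$. First I would generate a codebook of about $\exp(nI(U;X))$ sequences $u^n$, let $K$ be the index of the one jointly typical with $X^n$ (so that \eqref{cardinalitySISOcorrelated} and \eqref{rateSISOcorrelated} hold with $\tfrac1nH(K)\to I(U;X)$), and bin these indices into about $\exp(n(I(U;X)-I(U;Y)))$ bins, letting $\ell=f(X^n)$ be the bin of $K$. The bin index is transmitted with a channel code from Theorem \ref{cetathm}; since its rate may be taken just below $C_\eta(P)$ and above $I(U;X)-I(U;Y)$, the code satisfies $\mbb P[e(\Gamma_n,G)\le\theta]\ge1-\eta$. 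Terminal $B$ decodes the bin and then picks the unique codeword in it that is jointly typical with $Y^n$, calling its index $L$. The error $\mbb P[K\ne L\mid G=g]$ splits into a state-independent covering/binning error, controlled by the rate choices as in \cite{part2}, and the channel decoding error, which is at most $\theta$ on the good states $\{|g|\ge\gamma_0\}$; hence $\mbb P[\mbb P[K\ne L\mid G]\le\alpha]\ge1-\eta$, which is \eqref{errorSISOcorrelated}, and $H=I(U;X)$ is achievable.

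For the converse, the crucial step is to locate a single channel state that is simultaneously good for agreement and weak enough to carry only $C_\eta(P)$. I would argue by a counting argument: the agreement condition \eqref{errorSISOcorrelated} forces the set $\mc G=\{g:\mbb P[K\ne L\mid G=g]\le\alpha\}$ to have probability at least $1-\eta$, while for every $\delta>0$ the set $\{g:C(g)<C_\eta(P)+\delta\}=\{|g|<\gamma_\delta\}$ has probability strictly larger than $\eta$ by the definition of $\gamma_0$; since these two probabilities exceed $1$, the sets must intersect, yielding a state $g^\ast$ with $\mbb P[K\ne L\mid G=g^\ast]\le\alpha$ and $C(g^\ast)<C_\eta(P)+\delta$. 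Conditioning everything on $G=g^\ast$, Fano's inequality together with \eqref{cardinalitySISOcorrelated} gives $\tfrac1nH(K)\le\tfrac1nI(K;Y^n)+\tfrac1nI(K;\hat\ell\mid Y^n,G=g^\ast)+\delta_n$, where $\hat\ell=D(Z^n)$ and I used $L=\Lambda(Y^n,\hat\ell)$. The data-processing chain $K-(\ell,Y^n)-\hat\ell$ followed by $\ell\to T^n\to Z^n\to\hat\ell$ bounds the last term by $C(g^\ast)<C_\eta(P)+\delta$, and the standard single-letterization of \cite{part2} then produces an auxiliary $U$ with $U\circlearrow{X}\circlearrow{Y}$ satisfying $\tfrac1nH(K)\le I(U;X)+\delta_n$ and $I(U;X)-I(U;Y)\le C_\eta(P)+\delta+\delta_n$. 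Combining with \eqref{rateSISOcorrelated} and letting $\alpha,\delta,\delta_n\to0$ yields $H\le\max I(U;X)$ over the admissible region.

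The step I expect to be the main obstacle is the converse handling of the outage: ordinary conditioning on a good state only bounds the transmitted information by $C(g)$, which can be arbitrarily large on high-capacity good states, so the naive per-state bound does not produce $C_\eta(P)$. The intersection/counting argument above is what forces the bottleneck to be exactly the outage capacity, and care is needed because the decoded message $\hat\ell$ carries the extra channel noise $\xi^n$ absent from the noiseless model of \cite{part2}; I would check that this noise is harmless by noting it is independent of $(X^n,Y^n,K)$, so that the Markov relations used in the single-letterization survive.
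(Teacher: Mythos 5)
Your proposal is correct and follows essentially the same route as the paper: the same Ahlswede--Csisz\'ar binning scheme transported over an outage-capacity-achieving channel code for the direct part, and for the converse the same intersection argument (the probability-$(1-\eta)$ set of good-agreement states must meet the probability-$(>\eta)$ set of states with instantaneous capacity below $C_\eta(P)+\delta$), followed by Fano's inequality and the standard single-letterization. The only cosmetic difference is that the paper conditions on the entire positive-probability intersection set via an auxiliary state variable $\tilde{G}$, whereas you condition on a single state $g^\ast$ chosen from it.
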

\section{Proof of Theorem \ref{cetathm}}
\label{sec3}
 Let
    \[
        \gamma_0=\sup\{\gamma:\mbb P[\lvert G\rvert<\gamma]\leq\eta\}.
    \]
 We will prove first the following lemma:
    \begin{lemma}\label{lem:leftcontinuity}
    \[
        \mbb P[|G|<\gamma_0]\leq\eta,
    \]
    so the supremum actually is a maximum.
\end{lemma}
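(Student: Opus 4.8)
The plan is to recognize that the quantity $F(\gamma) := \mbb P[|G|<\gamma]$, viewed as a function of the threshold $\gamma\in\mbb R$, is nondecreasing and \emph{left}-continuous, and then to use left-continuity to show that the supremum defining $\gamma_0$ is attained. First I would record the elementary structural facts. Since $|G|\geq 0$ we have $F(\gamma)=0$ for $\gamma\leq 0$, so the set $S:=\{\gamma: F(\gamma)\leq\eta\}$ is nonempty (this uses $\eta\geq 0$); and since $F(\gamma)\to\mbb P[|G|<\infty]=1>\eta$ as $\gamma\to\infty$ (this uses $\eta<1$), the set $S$ is bounded above, so $\gamma_0=\sup S$ is a well-defined finite real number. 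Monotonicity of $F$ additionally shows that $S$ is a down-set: if $\gamma\in S$ and $\gamma'<\gamma$, then $F(\gamma')\leq F(\gamma)\leq\eta$, so $\gamma'\in S$.

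The crucial step is the left-continuity of $F$, which is where the heart of the argument lies. I would take any sequence $\gamma_n\uparrow\gamma_0$ and observe that the events $\{|G|<\gamma_n\}$ form an increasing sequence whose union is exactly $\{|G|<\gamma_0\}$: indeed $|G|<\gamma_0$ holds if and only if $|G|<\gamma_n$ for some $n$, because $\gamma_n\to\gamma_0$ from below. Continuity of the probability measure from below then yields $F(\gamma_n)=\mbb P[|G|<\gamma_n]\uparrow\mbb P[|G|<\gamma_0]=F(\gamma_0)$.

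To conclude, I would choose the approximating sequence inside $S$. Because $\gamma_0=\sup S$ and $S$ is a down-set, I can pick a strictly increasing sequence $\gamma_n\uparrow\gamma_0$ with $\gamma_n<\gamma_0$, and each such $\gamma_n$ lies in $S$ (being dominated by some element of $S$ arbitrarily close to the supremum). Each term then satisfies $F(\gamma_n)\leq\eta$, and passing to the limit via the left-continuity established above gives $\mbb P[|G|<\gamma_0]=F(\gamma_0)=\lim_n F(\gamma_n)\leq\eta$. Hence $\gamma_0\in S$, and the supremum is in fact a maximum.

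I expect the only genuinely delicate point to be the justification of left-continuity, specifically the identification of $\bigcup_n\{|G|<\gamma_n\}$ with $\{|G|<\gamma_0\}$ and the appeal to continuity from below; everything else is bookkeeping about monotone functions. A secondary point worth stating explicitly is why $S$ is nonempty and bounded, so that $\gamma_0$ is finite and the approximating sequence exists. It is also worth emphasizing that the strict inequality in the definition of $F$ is essential here: the analogous function $\gamma\mapsto\mbb P[|G|\leq\gamma]$ is only right-continuous, and for it the supremum need not be attained.
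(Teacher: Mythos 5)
Your proof is correct and follows essentially the same route as the paper: both arguments approximate $\gamma_0$ from the left by a sequence $\gamma_n\nearrow\gamma_0$ with each $\mbb P[|G|<\gamma_n]\leq\eta$, identify $\{|G|<\gamma_0\}$ with the increasing union $\bigcup_n\{|G|<\gamma_n\}$, and invoke continuity of the probability measure from below. You merely make explicit some bookkeeping the paper leaves implicit (finiteness of $\gamma_0$ and the down-set property guaranteeing $\gamma_n\in S$), which is a welcome but not substantively different addition.
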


\begin{proof}
    Let $\gamma_n\nearrow \gamma_0$ be a sequence converging to $\gamma_0$ from the left. Then 
    \[
        \{\gamma\in \mbb R:\gamma <\gamma_0\}=\bigcup_{n=1}^\infty\{\gamma\in \mbb R:\gamma <\gamma_n\}.
    \]
    From the sigma-continuity of probability measures, it follows that 
        $$\mbb P[\lvert G\rvert<\gamma_0]=\lim_n\mbb P[\lvert G\rvert<\gamma_n]\leq\eta.$$
\end{proof}
Now, we begin with the direct proof of Theorem \ref{cetathm}. We will show that
    \begin{equation}\label{eq:outage_geq}
        C_\eta(P)\geq\log\left(1+\frac{P\gamma_0^2}{\sigma^2}\right).
    \end{equation}
    Let $\theta,\delta>0$. It is well-known that there exists a code sequence $(\Gamma_n)_{n=1}^\infty$ and a blocklength $n_0$ such that
    \[
        \frac{\log\lvert \Gamma_n\rvert}{n}\geq\log\left(1+\frac{P\gamma_0^2}{\sigma^2}\right)-\delta
    \]
    and
    \[
        e(\Gamma_n,\gamma_0)\leq\theta
    \]
    for $n\geq n_0$. The degradedness of the Gaussian channels implies that also
    \[
        e(\Gamma_n,\gamma)\leq\theta
    \]
    for $n\geq n_0$, provided that $\gamma\geq \gamma_0$.  Therefore for $n\geq n_0$, Lemma \ref{lem:leftcontinuity} implies
    \[
        \mbb P[e(\Gamma_n,G)\leq\theta]
        \geq\mbb P[\lvert G\rvert\geq\gamma_0]
        =1-\mbb P[\lvert G\rvert<\gamma_0]
        \geq 1-\eta.
    \]
    This implies \eqref{eq:outage_geq}.
    
     Next, we prove the converse of Theorem \ref{cetathm}. We will show that
    \begin{equation}\label{eq:outage_leq}
        C_\eta(P)\leq\log\left(1+\frac{P\gamma_0^2}{\sigma^2}\right).
    \end{equation}
    Suppose this were not true. Then there exists an $\varepsilon>0$ such that for all $\theta,\delta>0$ there exists a code sequence $(\Gamma_n)_{n=1}^\infty$ satisfying 
    \begin{equation}\label{eq:larger}
        \frac{\log\lvert \Gamma_n\rvert}{n}\geq\log\left(1+\frac{P(\gamma_0+\varepsilon)^2}{\sigma^2}\right)-\delta
    \end{equation}
    and
    \begin{equation}\label{eq:converse_err}
        \mbb P[e(\Gamma_n,G)\leq\theta]\geq 1-\eta
    \end{equation}
    for sufficiently large $n$. The degradedness implies $e(\Gamma_n,\gamma)\geq e(\Gamma_n,\gamma_0+\varepsilon)$ for all $\gamma\leq\gamma_0+\varepsilon$. Since $\delta$ may be arbitrary, we may choose it in such a way that the right-hand side of \eqref{eq:larger} is strictly larger than $\log(1+(P\gamma_0^2)/\sigma^2)$. We define $\gamma_1$ to be the solution of the equation $$\log(1+(P\gamma_1^2)/\sigma^2)=\log\left(1+\frac{P(\gamma_0+\varepsilon)^2}{\sigma^2}\right)-\delta.$$  $\gamma_1$ is chosen such that the rate of the code sequence is greater than the capacity of the channel with gain $G$ when $|G|<\gamma_1$. Therefore, it holds for large $n$ that the error probability is greater than $\theta$ when  $|G|<\gamma_1.$ It holds for large $n$ that the error probability is greater than $\theta$ when  $|G|<\gamma_1$.
    It follows that $$
        \mbb P[e(\Gamma_n,G)>\theta]
        \geq\mbb P[\lvert G\rvert < \gamma_1]
        >\eta$$by the definition of $\gamma_0$, where we used that $\gamma_1>\gamma_0$ from the choice of $\delta$. This is a contradiction to \eqref{eq:converse_err}, and so \eqref{eq:outage_leq} must be true. This completes the proof of Theorem \ref{cetathm}.
\section{Proof of Theorem \ref{ccretathm}}
\label{sec4}
\subsection{Converse Proof}
Let $(K,L)$ be a permissible pair according to the CR-generation protocol introduced in Section \ref{systemmodel} with power constraint as in \eqref{energyconstraintSISOCorrelated}. Let $T^n$ and $Z^n$ be the random channel input and output sequence, respectively.
We further assume that $(K,L)$ satisfies $\eqref{errorSISOcorrelated}$  $\eqref{cardinalitySISOcorrelated}$ and $\eqref{rateSISOcorrelated}$.
We are going to show for $\alpha'(n)>0$ that
\begin{align}
    \frac{H(K)}{n} \leq \underset{ \substack{U \\{\substack{U \circlearrow{X} \circlearrow{Y}\\ I(U;X)-I(U;Y) \leq C_{\eta}(P)+\alpha'(n)}}}}{\max} I(U;X), \nonumber 
\end{align}
where $\underset{n\rightarrow \infty}{\lim}\alpha'(n)$ can be made arbitrarily small.
In our proof, we will use  the following lemma: 
\begin{lemma} (Lemma 17.12 in \cite{codingtheorems})
For arbitrary random variables $S$ and $R$ and sequences of random variables $X^{n}$ and $Y^{n}$, it holds that
\begin{align}
 &I(S,X^{n}|R)-I(S;Y^{n}|R)  \nonumber \\
 &=\sum_{i=1}^{n} I(S;X_{i}|X_{1}\dots X_{i-1} Y_{i+1}\dots Y_{n}R) \nonumber \\ &\quad -\sum_{i=1}^{n} I(S;Y_{i}|X_{1}\dots X_{i-1} Y_{i+1}\dots Y_{n}R) \nonumber \\
 &=n[I(S;X_{J}|V)-I(S;Y_{J}|V)],\nonumber
\end{align}
where $V=X_{1}\dots X_{J-1}Y_{J+1}\dots Y_{n}RJ$, with $J$ being a random variable independent of $R$,\ $S$, \ $X^{n}$ \ and $Y^{n}$ and uniformly distributed on $\{1 \dots n \}$.
\label{lemma1}
\end{lemma}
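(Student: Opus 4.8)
The plan is to prove the two asserted equalities in turn. Both are purely formal manipulations of (conditional) mutual information, so no step is genuinely deep; the first equality is a telescoping (interpolation) identity and the second is the routine single-letterization via a time-sharing variable. The real work is careful bookkeeping of the conditioning sets.

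For the first equality I would interpolate between the two sequences. Define, for $i=0,1,\dots,n$, the hybrid quantity
\[
    D_i = I(S;X_1\dots X_i Y_{i+1}\dots Y_n|R),
\]
so that $D_0=I(S;Y^n|R)$ and $D_n=I(S;X^n|R)$, and therefore $I(S;X^n|R)-I(S;Y^n|R)=\sum_{i=1}^n(D_i-D_{i-1})$. To evaluate a single increment, observe that $D_i$ and $D_{i-1}$ condition on the identical block $X_1\dots X_{i-1}Y_{i+1}\dots Y_n$ together with $R$; the only difference is that $D_i$ additionally reveals $X_i$ whereas $D_{i-1}$ reveals $Y_i$. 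Applying the chain rule for mutual information to split off this last symbol in each case, the common term $I(S;X_1\dots X_{i-1}Y_{i+1}\dots Y_n|R)$ cancels and leaves
\[
    D_i-D_{i-1}=I(S;X_i|X_1\dots X_{i-1}Y_{i+1}\dots Y_n R)-I(S;Y_i|X_1\dots X_{i-1}Y_{i+1}\dots Y_n R).
\]
Summing over $i$ yields exactly the middle expression of the lemma; the only point needing care is the boundary, where the prefix $X_1\dots X_{i-1}$ (at $i=1$) or the suffix $Y_{i+1}\dots Y_n$ (at $i=n$) is empty.

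For the second equality I would single-letterize using the time-sharing variable $J$, uniform on $\{1,\dots,n\}$ and independent of $(R,S,X^n,Y^n)$, and set $V=X_1\dots X_{J-1}Y_{J+1}\dots Y_n R J$. Writing $W_i=X_1\dots X_{i-1}Y_{i+1}\dots Y_n R$ for the conditioning block appearing above, conditioning on $\{J=i\}$ identifies $X_J$ with $X_i$ and $V$ with $(W_i,i)$. Since $J$ is uniform and independent, averaging over its value gives
\[
    I(S;X_J|V)=\frac{1}{n}\sum_{i=1}^n I(S;X_i|W_i), \qquad I(S;Y_J|V)=\frac{1}{n}\sum_{i=1}^n I(S;Y_i|W_i).
\]
Multiplying by $n$ and subtracting reproduces the sum obtained in the first part, which establishes the second equality.

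If anything constitutes an obstacle here it is only the notational discipline of the telescoping step: one must expand the $X$-block forward and the $Y$-block backward consistently so that the intermediate terms cancel, and confirm that the chain rule is applied in the correct order for the two sequences. There is no analytic difficulty, and since the identity conditions abstractly on $S$, $R$ and treats $X^n,Y^n$ as generic random vectors, the same manipulations remain valid whether the quantities are discrete or continuous, so the lemma holds as stated.
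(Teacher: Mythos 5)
Your proof is correct: the telescoping hybrid argument for the first equality and the time-sharing variable $J$ for the second are exactly the standard derivation of this identity (the Csisz\'ar sum / single-letterization lemma). The paper itself gives no proof, citing it as Lemma 17.12 of Csisz\'ar--K\"orner, and your argument matches the one found there, so there is nothing to flag beyond noting that the independence of $J$ from $(R,S,X^n,Y^n)$ is indeed what lets you replace $I(S;X_i\mid W_i, J=i)$ by $I(S;X_i\mid W_i)$ --- a point you handle correctly.
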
Let $J$ be a random variable uniformly distributed on $\{1\dots n\}$ and independent of $K$, $X^n$ and $Y^n$. We further define $U=KX_{1}\dots X_{J-1}Y_{J+1}\dots Y_{n}J.$ \\
Notice  that
{{\begin{align}
H(K)&=I(K;X^{n}) \nonumber\\
&\overset{(\RM{1})}{=}\sum_{i=1}^{n} I(K;X_{i}|X_{1}\dots X_{i-1}) \nonumber\\
&=n I(K;X_{J}|X_{1}\dots X_{J-1},J) \nonumber\\
&\overset{(\RM{2})}{\leq }n I(U;X_{J}), \nonumber
\end{align}}}where $(\RM{1})$ and $(\RM{2})$ follow from the chain rule for mutual information.\\
We will show next for $\alpha'(n)>0$ that
\begin{align}
    I(U;X_J)-I(U;Y_J) \leq C_{\eta}(P)+\alpha'(n). \nonumber
\end{align}
Applying Lemma \ref{lemma1} for $S=K$, $R=\varnothing$ with $V=X_1\hdots X_{J-1}Y_{J+1}\hdots Y_{n}RJ$ yields
{{\begin{align}
&I(K;X^{n})-I(K;Y^{n}) \nonumber \\
&=n[I(K;X_{J}|V)-I(K;Y_{J}|V)] \nonumber\\
&\overset{(a)}{=}n[I(KV;X_{J})-I(K;V)-I(KV;Y_{J})+I(K;V)] \nonumber\\
&\overset{(b)}{=}n[I(U;X_{J})-I(U;Y_{J})], 
\label{UhilfsvariableSISO1}
\end{align}}}where $(a)$ follows from the chain rule for mutual information and $(b)$ follows from $U=KV$. \\
It results using (\ref{UhilfsvariableSISO1}) that
{{\begin{align}
H(K|Y^n)&=H(K)-I(K;Y^{n})\nonumber \\ &\overset{(c)}{=}H(K)-H(K|X^{n})-I(K;Y^{n}) \nonumber\\
&=I(K;X^{n})-I(K;Y^{n}) \nonumber\\
&=n[I(U;X_{J})-I(U;Y_{J})], 
\label{star2SISO2}
\end{align}}}
where $(c)$ follows because $K=\Phi(X^{n})$ from $(\ref{KLSISOcorrelated}).$ \\~\\
Let $\gamma_0=\text{sup}\Big\{\gamma:\mbb P\left[|G|<\gamma\right]\leq \eta \Big \}. $
We consider for $\epsilon>0$ being arbitrarily small the set:
\begin{align}
    &\Omega_1 \nonumber \\ &=\Big\{g \in \mathbb{C:} \   \mbb P\left[K\neq L|G=g\right]\leq \alpha \ \text{and} \ |g|\leq \gamma_0+\epsilon\Big\}.\nonumber
\end{align}
\begin{lemma}
\begin{align}
\mbb P\left[\Omega_1\right]>0. \nonumber
\end{align}
\end{lemma}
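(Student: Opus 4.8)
The plan is to write $\Omega_1$ as the intersection of two events whose probabilities sum to strictly more than one, which forces the intersection to be non-null. First I would introduce the \textbf{good-gain} set
\[
\sete = \big\{g \in \mbb C : \mbb P[K\neq L \mid G=g] \leq \alpha\big\}
\]
together with the \textbf{bounded-gain} set $\setb = \{g \in \mbb C : |g| \leq \gamma_0 + \epsilon\}$, so that by construction $\Omega_1 = \sete \cap \setb$. The hypothesis \eqref{errorSISOcorrelated} is then nothing but the statement $\mbb P[\sete] \geq 1-\eta$, so one of the two events is already under control.

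For the second event I would exploit that $\gamma_0$ is defined as a \emph{supremum}. Since $\gamma_0 + \epsilon > \gamma_0$, the value $\gamma_0 + \epsilon$ lies outside the set $\{\gamma : \mbb P[|G|<\gamma] \leq \eta\}$ whose supremum is $\gamma_0$; hence its defining inequality must fail, i.e. $\mbb P[|G| < \gamma_0 + \epsilon] > \eta$. Because $\{|g| < \gamma_0+\epsilon\} \subseteq \setb$, monotonicity of $\mbb P$ yields the \emph{strict} bound $\mbb P[\setb] > \eta$.

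The final step is a union-bound (Bonferroni) inequality:
\[
\mbb P[\Omega_1] = \mbb P[\sete \cap \setb] \geq \mbb P[\sete] + \mbb P[\setb] - 1 \geq (1-\eta) + \mbb P[\setb] - 1 = \mbb P[\setb] - \eta > 0,
\]
where the concluding inequality is strict precisely because $\mbb P[\setb] > \eta$ strictly. This closes the argument.

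The main point that must not be glossed over — and what I regard as the only real obstacle — is securing the \emph{strict} inequality $\mbb P[\setb] > \eta$ rather than the weak $\mbb P[\setb] \geq \eta$. It is exactly this strictness that upgrades the intersection bound from $\geq 0$ to $> 0$; with only a weak bound one would conclude $\mbb P[\Omega_1] \geq 0$, which is vacuous. The strictness rests entirely on the supremum characterization of $\gamma_0$ combined with $\gamma_0 + \epsilon > \gamma_0$, so I would state that implication carefully. Everything else is elementary measure-theoretic bookkeeping.
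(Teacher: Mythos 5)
Your proposal is correct and rests on exactly the same two facts as the paper's proof: the strict inequality $\mbb P[|G|\le\gamma_0+\epsilon]>\eta$, obtained because $\gamma_0+\epsilon$ exceeds the supremum $\gamma_0$, together with $\mbb P\bigl[\mbb P[K\neq L\mid G]\le\alpha\bigr]\ge 1-\eta$ from \eqref{errorSISOcorrelated}. The only cosmetic difference is that you combine them via the Bonferroni bound $\mbb P[\sete\cap\setb]\ge\mbb P[\sete]+\mbb P[\setb]-1$, whereas the paper conditions on $\{|G|\le\gamma_0+\epsilon\}$ and its complement to first show the conditional probability is positive and then multiplies by $\mbb P[\setb]>0$; both computations are equivalent, and you correctly identify the strictness of $\mbb P[\setb]>\eta$ as the crux.
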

\begin{proof}
From the definition of $\gamma_0,$ we know that
\begin{align}
    \mbb P \left[|G|< \gamma_0+\epsilon\right] >\eta. \nonumber
\end{align}
This implies that
\begin{align}
    \mbb P \left[|G|\leq \gamma_0+\epsilon\right] &\geq \mbb P \left[|G|< \gamma_0+\epsilon\right] \nonumber \\
    &>\eta.
\end{align}
As a result
\begin{align}
    \mbb P\left[|G|\leq \gamma_0+\epsilon \right] =\eta_1, \nonumber
\end{align}
where $0\leq \eta<\eta_1\leq 1 .$
It follows from \eqref{errorSISOcorrelated} that
\begin{align}
    &1-\eta \nonumber \\ 
    & \leq \mbb P\left[\mbb P\left[K\neq L|G\right]\leq \alpha\right] \nonumber \\
    &=\mbb P\left[|G|\leq \gamma_0+\epsilon \right] \mbb P\left[ \mbb P\left[K\neq L\bigm|G\right]\leq \alpha\bigm||G|\leq \gamma_0+\epsilon \right] \nonumber \\
    &\quad+\mbb P\left[|G|> \gamma_0+\epsilon \right] \mbb P\left[ \mbb P\left[K\neq L\bigm|G\right]\leq \alpha\bigm||G|> \gamma_0+\epsilon \right] \nonumber \\
    &=\eta_1 \ \mbb P\left[ \mbb P\left[K\neq L\bigm|G\right]\leq \alpha\bigm||G|\leq \gamma_0+\epsilon \right] \nonumber \\
    &\quad+(1-\eta_1) \ \mbb P\left[ \mbb P\left[K\neq L\bigm|G\right]\leq \alpha\bigm||G|> \gamma_0+\epsilon \right]
    \nonumber \\
    &\leq \mbb P\left[ \mbb P\left[K\neq L\bigm|G\right]\leq \alpha\bigm||G|\leq \gamma_0+\epsilon \right]+(1-\eta_1) \nonumber \\
    &<\mbb P\left[ \mbb P\left[K\neq L\bigm|G\right]\leq \alpha\bigm||G|\leq \gamma_0+\epsilon \right]+(1-\eta), \nonumber
\end{align}
where we used that $1-\eta_1<1-\eta.$ This means that
\begin{align}
    \mbb P\left[ \mbb P\left[K\neq L\bigm|G\right]\leq \alpha\bigm||G|\leq \gamma_0+\epsilon \right]>0. \nonumber
\end{align}
In addition, since $\eta_1>0$, it follows that
\begin{align}
    \mbb P\left[ \mbb P\left[K\neq L\bigm|G\right]\leq \alpha,|G|\leq \gamma_0+\epsilon \right]>0. \nonumber
\end{align}
It follows that
$$\mbb P\left[\Omega_1\right]>0.$$
\end{proof}Next, we define $\tilde{G}$ to be a  random variable, independent of $X^n$,$Y^n$ and $\xi^n$, with alphabet $\Omega_1$ such that for every Borel set $\seta \subseteq \mbb C,$ it holds that
\begin{align}
    \mbb P \left[ \tilde{G}\in \seta \right]=\mbb P \left[ G\in\seta|G\in\Omega_1\right]. \nonumber
\end{align}
We fix the CR generation protocol and change the state distribution of the slow fading channel. We obtain the following new channel:
\begin{align}
    \tilde{Z}_{i}=\tilde{G}T_i+\xi_i \quad i=1\hdots n,  \nonumber
\end{align}
where $\tilde{Z}^n$ is the new output sequence.
We further define $\tilde{L}$ such that
\begin{align}
    \tilde{L}=\Psi(Y^n,\tilde{Z}^n). \nonumber
\end{align}
Clearly, it holds that
\begin{align}
     \mbb P\left[K\neq \tilde{L}|\tilde{G}=g\right]\leq \alpha  \quad \forall g \in \Omega_1,
     \label{newerrorinequality}
\end{align}
and that
\begin{align}
    \log(1+\frac{|g|^2P}{\sigma^2})\leq \log(1+\frac{(\gamma_0+\epsilon)^2P}{\sigma^2})\quad \forall g \in \Omega_1.
    \label{absolutevalue}
\end{align}
Furthermore, since $\xi_i \sim \mathcal{N}_{\mbb C}(0,\sigma^2), i=1\hdots n$,  it follows from \eqref{energyconstraintSISOCorrelated} that for $i=1\hdots n,$
\begin{align}
    I(T_{i},\tilde{Z}_{i}|\tilde{G}=g)\leq \log(1+\frac{|g|^2P}{\sigma^2}) \quad \forall \ g \in \Omega_1.
    \label{mutualinfmax}
\end{align}
We have:
\begin{align}
    H(K|Y^n)&=H(K|\tilde{G},Y^n) \nonumber \\
            &=H(K|\tilde{G},Y^n,\tilde{Z}^n)+I(K;\tilde{Z}^n|\tilde{G},Y^n), \nonumber
\end{align}
where we used that $\tilde{G}$ is independent of $(K,Y^n).$ 
On the one hand, we have:
\begin{align}
                                H\left(K|\tilde{Z}^n,\tilde{G},Y^n\right)  &\overset{(a)}{\leq } H\left(K|\tilde{L},\tilde{G}\right) \nonumber \\    &\overset{(b)}{\leq } \mbb E\left[1+\log|\mathcal{K}|\mbb P[K\neq \tilde{L}|\tilde{G}]\right]  \nonumber \\                          &=1+ \log|\mathcal{K}|\mbb E \left[ P[K\neq \tilde{L}|\tilde{G}]\right]  \nonumber \\
                                &\overset{(c)}{\leq } 1+\alpha \log|\mathcal{K}|  \nonumber \\                               &\overset{(d)}{\leq } 1+\alpha \ c n, \nonumber
\end{align}
where (a) follows from $\tilde{L}=\Psi(Y^n,\tilde{Z}^n)$, (b) follows from Fano's Inequality, (c) follows from \eqref{newerrorinequality}  and (d) follows from $\log|\mathcal{K}|\leq cn$ in \eqref{cardinalitySISOcorrelated}.

    On the other hand, we have:
   {{\begin{align} 
I(K;\tilde{Z}^n|\tilde{G},Y^{n})&\leq I(X^{n}K;\tilde{Z}^n|\tilde{G},Y^{n}) \nonumber\\
& \overset{(a)}{\leq } I(T^{n};\tilde{Z}^n|\tilde{G},Y^{n})  \nonumber \\
& = h(\tilde{Z}^n|\tilde{G},Y^{n})- h(\tilde{Z}^n|T^{n},\tilde{G},Y^{n}) \nonumber \\
& \overset{(b)}{=} h(\tilde{Z}^n|\tilde{G},Y^{n})- h(\tilde{Z}^n|\tilde{G},T^{n}) \nonumber \\
& \overset{(c)}{\leq }  h(\tilde{Z}^n|\tilde{G})- h(\tilde{Z}^n|\tilde{G},T^{n}) \nonumber \\
& = I(T^{n};\tilde{Z}^n|\tilde{G})  \nonumber \\
& \overset{(d)}{=} \sum_{i=1}^{n} I(\tilde{Z}_{i};T^{n}|\tilde{G},\tilde{Z}^{i-1}) \nonumber \\
& = \sum_{i=1}^{n} h(\tilde{Z}_{i}|\tilde{G},\tilde{Z}^{i-1})-h(\tilde{Z}_{i}|\tilde{G},T^{n},\tilde{Z}^{i-1}) \nonumber \\
& \overset{(e)}{=} \sum_{i=1}^{n} h(\tilde{Z}_{i}|\tilde{G},\tilde{Z}^{i-1})-h(\tilde{Z}_{i}|\tilde{G},T_{i}) \nonumber \\
& \overset{(f)}{\leq} \sum_{i=1}^{n} h(\tilde{Z}_{i}|\tilde{G})-h(\tilde{Z}_{i}|\tilde{G},T_{i}) \nonumber \\
& = \sum_{i=1}^{n} I(T_{i};\tilde{Z}_{i}|\tilde{G}) \nonumber \\
&\overset{(g)}{\leq}n\mbb E \left[\log(1+\frac{|\tilde{G}|^2P}{\sigma^2})\right]\nonumber \\
& \overset{(h)}{\leq} n \log(1+\frac{(\gamma_0+\epsilon)^2P}{\sigma^2}) \nonumber \\
&\overset{(i)}{=}n (C_{\eta}(P)+\epsilon'),\nonumber
\end{align}}}with $\epsilon'$ being arbitrarily small, where $(a)$ follows from the Data Processing Inequality because $Y^{n}\circlearrow{X^{n}K}\circlearrow{\tilde{G}T^{n}}\circlearrow{\tilde{Z}^{n}}$ forms a Markov chain, $(b)$ follows because $Y^{n}\circlearrow{X^{n}K}\circlearrow{\tilde{G}T^{n}}\circlearrow{\tilde{Z}^{n}}$ forms a Markov chain, $(c)(f)$ follow because conditioning does not increase entropy, $(d)$ follows from the chain rule for mutual information, $(e)$ follows because $T_{1}\dots T_{i-1} T_{i+1}\dots T_{n}\tilde{Z}^{i-1} \circlearrow{\tilde{G}T_{i}}\circlearrow{\tilde{Z}_{i}}$ forms a Markov chain, $(g)$ follows from \eqref{mutualinfmax} and
$(h)$ follows from \eqref{absolutevalue} and $(i)$ follows from Theorem \ref{cetathm} using that $\epsilon$ is arbitrarily small.\\
This proves that for $0\leq \eta<1$
 \begin{align}
     \frac{H(K|Y^n)} {n} \leq C_{\eta}(P)+\alpha'(n),
     \label{star1SISOnonemptySISO}
 \end{align}
 where $\alpha'(n)=\frac{1}{n}+\alpha c+\epsilon' >0.$ 
 \\~\\
From (\ref{star2SISO2}) and (\ref{star1SISOnonemptySISO}), we deduce that for $0\leq\eta<1$
{{\begin{align}
&I(U;X_{J})-I(U;Y_{J})  \leq C_{\eta}(P) +\alpha'(n), \nonumber
\end{align}}}where $U \circlearrow{X_{J}} \circlearrow{Y_{J}}.$ \\
\color{black}Since the joint distribution of $X_{J}$ and $Y_{J}$ is equal to $P_{XY}$, $\frac{H(K)}{n}$ is upper-bounded by $I(U;X)$ subject to $I(U;X)-I(U;Y) \leq C_{\eta}(P) + \alpha'(n)$ with $U$ satisfying $U \circlearrow{X} \circlearrow{Y}$. As a result, for $\alpha'(n)>0$, it holds that
\begin{align}
    \frac{H(K)}{n} \leq \underset{ \substack{U \\{\substack{U \circlearrow{X} \circlearrow{Y}\\ I(U;X)-I(U;Y) \leq C_{\eta}(P)+\alpha'(n)}}}}{\max} I(U;X).
    \nonumber
\end{align}
Here, $\underset{n\rightarrow\infty}{\lim}\alpha'(n)$ can be made arbitrarily small. This completes the converse proof.
\subsection{Direct Proof}
\label{SISOproof}
We extend the coding scheme provided in \cite{part2} to slow fading channels. By continuity, it suffices to show that 
$$ C'_{\eta,CR}(P)=\underset{ \substack{U \\{\substack{U \circlearrow{X} \circlearrow{Y}\\ I(U;X)-I(U;Y) \leq C'}}}}{\max} I(U;X)  $$ is an achievable $\eta$-outage CR rate for every $C'<C_{\eta}(P).$
Let $U$ be a random variable satisfying $I(U;X)-I(U;Y) \leq C'$. We are going to show that $H=I(U;X)$ is an achievable $\eta$-outage CR rate. Without loss of generality, assume that the distribution of $U$ is a possible type for block length $n$.
Let
{{\begin{align}
N_{1}&=\exp\left(n[I(U;X)-I(U;Y)+3\delta]\right) \nonumber\\
N_{2}&=\exp\left(n[I(U;Y)-2\delta]\right). \nonumber
\end{align}}}For each pair $(i,j)$ with $1\leq i \leq N_{1}$ and $1\leq j \leq N_{2}$, we define a random sequence $\bs{U}_{i,j}\in\mathcal{U}^n$ of type $P_{U}$. Each realization $\bs{u}_{i,j}$ of $\bs{U}_{i,j}$ is known to both terminals.  
This means that  $N_{1}$ codebooks $C_{i}, 1\leq i \leq N_{1}$, are known to both terminals, where each codebook contains $N_{2}$ sequences $ \bs{u}_{i,j}, \  j=1\hdots N_2$.\\
It holds for every $X$-typical $\bs{x}$ that $$\mbb P[\exists (i,j) \ \text{s.t} \ \bs{U}_{ij} \in \mathcal{T}_{U|X}^{n}\left(\bs{x}\right)|X^{n}=\bs{x}]\geq 1-\exp(-\exp(nc')),$$ for a suitable $c'>0$, as in the proof of Theorem 4.1 of \cite{part2}. 
 For $K(\bs{x})$, we choose a sequence $\bs{u}_{ij}$ jointly typical with $\bs{x}$ (either one if there are several).   Let $f(\bs{x})=i$ if $K(\bs{x}) \in C_{i}$.  If no such $\bs{u}_{i,j}$ exists, then $f(\bs{x})=N_1+1$ and $K(\bs{x})$ is set to a constant sequence $\bs{u}_0$ different from all the $\bs{u}_{ij}s$ and known to both terminals.
  Since $ C'<C_{\eta}(P)$, we choose $\delta$ to be sufficently small such that
      \begin{align}
     \frac{\log \lVert f \rVert}{n}&=\frac{\log(N_1+1)}{n} \nonumber \\
     &\leq C_{\eta}(P)-\delta',
     \label{inequalitylogfSISO}
      \end{align}
 for some $\delta'>0$, where $\lVert f \rVert$ refers to the cardinality of the set of messages $\{i^\star=f(\bs{x})\}.$ This is the same notation used in \cite{codingtheorems}.
 The message $i^\star=f(\bs{x})$, with $i^\star\in\{1,\hdots,N_1+1\}$, is encoded to a sequence $\bs{t}$ using a code sequence $(\Gamma^\star_n)_{n=1}^{\infty}$ with rate $\frac{\log \lvert \Gamma^\star_n \rvert}{n}=\frac{\log \lVert f \rVert}{n}$ satisfying \eqref{inequalitylogfSISO}
 and with error probability $e(\Gamma^\star_n,G)$ satisfying:
 \begin{align}
     \mbb P\left[e(\Gamma^\star_n,G) \leq \theta \right] \geq 1-\eta, 
     \label{outageprobCR}
 \end{align}
where $\theta$ is sufficiently small for sufficiently large $n$.
  \color{black}
  From the definition of the $\eta$-outage capacity, we know that such a code sequence exists. The sequence $\bs{t}$ is sent over the slow fading channel. Let $\bs{z}$ be the channel output sequence. Terminal $B$ decodes the message $\tilde{i}^\star$ from the knowledge of $\bs{z}.$
Let $L(\bs{y},\tilde{i}^\star)=\bs{u}_{\tilde{i}^\star,j}$ if $\bs{u}_{\tilde{i}^\star,j}$ and $\bs{y}$ are UY-typical . If there is no such $\bs{u}_{\tilde{i}^\star,j}$ or there are several, $L$ is set equal to $\bs{u}_0$ (since $K$ and $L$ must have the same alphabet).
Now, we are going to show that the requirements in $\eqref{errorSISOcorrelated}$  $\eqref{cardinalitySISOcorrelated}$ and $\eqref{rateSISOcorrelated}$ are satisfied.
Clearly, (\ref{cardinalitySISOcorrelated}) is satisfied  for $c=2(H(X)+1)$, $n$ sufficiently large:
{{\begin{align}
|\mathcal{K}|&=N_1 N_2+1 \nonumber \\
             &= \exp(n\left[I(U;X)+\delta\right])+1 \nonumber \\
             &\leq\exp(2n\left[H(X)+1\right]).\nonumber
\end{align}}}We define next for a fixed $\bs{u}_{i,j}$ the set
$$\Omega=\{ \bs{x}\in\mathcal{X}^{n} \ \text{s.t.} \ (\bs{x},\bs{u}_{i,j}) \ \text{jointly typical}\}.$$
As shown in \cite{part2}, it holds that
\begin{align}
&\mbb P[K=\bs{u}_{i,j}] \nonumber \\&=\sum_{\bs{x}\in\Omega}\mbb P[K=\bs{u}_{i,j}|X^n=\bs{x}]P_{X}^n(\bs{x}) \nonumber \\
&\quad+\sum_{\bs{x}\in\Omega^c}\mbb P[K=\bs{u}_{i,j}|X^n=\bs{x}]P_{X}^n(\bs{x}) \nonumber \\
&\overset{(\RM{1})}{=}\sum_{\bs{x}\in\Omega}\mbb P[K=\bs{u}_{i,j}|X^n=\bs{x}]P_{X}^n(\bs{x}) \nonumber \\
&\leq \sum_{\bs{x}\in\Omega}P_{X}^n(\bs{x}) \nonumber \\
&=P_{X}^{n}(\{\bs{x}: (\bs{x},\bs{u}_{i,j}) \ \text{jointly typical}\}) \nonumber\\
& = \exp\left(-nI(U;X)+o(n)\right), \nonumber
\end{align}
where (\RM{1}) follows because for  $(\mathbf{x},\bs{u}_{i,j})$ being not jointly typical, we have $\mbb P[K=\bs{u}_{i,j}|X^n=\bs{x}]=0.$\\
This yields
{{\begin{align}
H(K) & \geq 
n I(U;X)+ o(n) \nonumber\\
& = n H+o(n). \nonumber
\end{align}}}Thus, (\ref{rateSISOcorrelated}) is satisfied.
Now, it remains to prove that \eqref{errorSISOcorrelated} is satisfied. Let $\bs{M}=\bs{U}_{11}\hdots \bs{U}_{N_{1}N_{2}}.$ We define the following two sets which depend on $\bs{M}$:
\begin{align}
    S_{1}(\bs{M})&=\{(\bs{x},\bs{y}):(K(\bs{x}),\bs{x},\bs{y}) \in \mathcal{T}_{UXY}^{n}\} \nonumber
\end{align} and
\begin{align}
    S_{2}(\bs{M})=\{(\bs{x},\bs{y}):(\bs{x},\bs{y}) \in S_{1}(\bs{M}) \ \text{s.t.} \ \exists \ \bs{U}_{i\ell}\neq\bs{U}_{ij}=K(\bs{x}) \nonumber \\   \text{jointly typical with} \ \bs{y} \ (\text{with the same first index} \ i)
\}.\nonumber
\end{align}
It is proved in \cite{part2} that
\begin{align}
    \mathbb{E}_{\bs{M}}\left[P_{XY}^{n}(S_{1}^{c}(\bs{M}))+P_{XY}^{n}(S_{2}(\bs{M}))\right] \leq \beta,
    \label{averagebeta}
\end{align}
where $\beta$ is exponentially small for sufficiently large $n$. \begin{remark}
$P_{XY}^{n}(S_{1}^{c}(\bs{M}))$ and $P_{XY}^{n}(S_{2}(\bs{M}))$ are here random variables depending on $\bs{M}.$
\end{remark}We choose a realization $\bs{m}=\bs{u}_{11}\hdots \bs{u}_{N_1N_2}$ satisfying:
\begin{align}
    P_{XY}^{n}(S_{1}^{c}(\bs{m}))+P_{XY}^{n}(S_{2}(\bs{m})) \leq \beta. \nonumber
\end{align} 
From \eqref{averagebeta}, we know that such a realization exists.
Now, we define the following event:
\begin{align}
    \mathcal{D}_{\bs{m}}= ``K(X^n) \ \text{is equal to none of the} \  \bs{u}_{i,j}s". \nonumber
\end{align}
We further define $I^\star=f(X^n)$ to be the random variable modeling the message encoded by Terminal $A$ and  $\tilde{I}^\star$ to be the random variable modeling the message decoded by Terminal $B$. 
We have:
\begin{align}
    \mbb P[K\neq L|G] \nonumber &=\mbb P[K\neq L|G,I^\star=\tilde{I}^\star]\mbb P[I^\star=\tilde{I}^\star|G] \nonumber \\
        &\quad + \mbb P[K\neq L|G,I^\star\neq \tilde{I}^\star]\mbb P[I^\star\neq\tilde{I}^\star|G] \nonumber \\
        &\leq \mbb P[K\neq L|G,I^\star=\tilde{I}^\star]+ \mbb P[I^\star\neq\tilde{I}^\star|G]\nonumber.
\end{align}
Here:
\begin{align}
   & \mbb P[K\neq L|G,I^\star=\tilde{I}^\star] \nonumber \\
   &= \mbb P[K\neq L|G,I^\star=\tilde{I}^\star,\mathcal{D}_{\bs{m}}]\mbb P[\mathcal{D}_{\bs{m}}|G,I^\star=\tilde{I}^\star]   \nonumber \\
   &\quad + \mbb P[K\neq L|G,I^\star=\tilde{I}^\star,\mathcal{D}_{\bs{m}}^c]\mbb P[\mathcal{D}_{\bs{m}}^c|G,I^\star=\tilde{I}^\star] \nonumber \\
   &\overset{(\RM{1})}{=}\mbb P[K\neq L|G,I^\star=\tilde{I}^\star,\mathcal{D}_{\bs{m}}^c]\mbb P[\mathcal{D}_{\bs{m}}^c|G,I^\star=\tilde{I}^\star] \nonumber \\
   &\leq \mbb P[K\neq L|G,I^\star=\tilde{I}^\star,\mathcal{D}_{\bs{m}}^c],\nonumber
\end{align}
where $(\RM{1})$ follows from $\mbb P[K\neq L|G,I^\star=\tilde{I}^\star,\mathcal{D}_{\bs{m}}]=0,$ since conditioned on $G$, $I^\star=\tilde{I}^\star$ and $\mathcal{D}_{\bs{m}}$, we know that $K$ and $L$ are both equal to $\bs{u}_0$.
It follows that
\begin{align}
    &\mbb P[K\neq L|G] \nonumber \\
    &\leq \mbb P[K\neq L|G,I^\star=\tilde{I}^\star,\mathcal{D}_{\bs{m}}^c]+ \mbb P[I^\star\neq\tilde{I}^\star|G] \nonumber \\
    &\leq P_{XY}^n\left(S_{1}^{c}(\bs{m})\cup S_{2}(\bs{m})\right)+ \mbb P[I^\star\neq\tilde{I}^\star|G] \nonumber \\
    &\overset{(a)}{\leq}P_{XY}^{n}(S_{1}^{c}(\bs{m}))+P_{XY}^n\left(S_{2}(\bs{m})\right) +\mbb P[I^\star\neq\tilde{I}^\star|G] \nonumber \\
    &\leq \beta+ \mbb P[I^\star\neq\tilde{I}^\star|G],\nonumber
\end{align}
where $(a)$ follows from the union bound. \\
From \eqref{outageprobCR}, we know that
\begin{align}
   \mbb P\left[ \mbb P\left[I^\star\neq \tilde{I}^\star|G\right]\leq \theta\right] \geq 1-\eta. \nonumber
\end{align}
We have:
\begin{align}
    &\mbb P\left[I^\star\neq \tilde{I}^\star|G\right]\leq \theta \implies \mbb P[K\neq L|G] \leq \beta+ \theta. \nonumber
\end{align}
By choosing $\alpha=\beta+ \theta$, we have:
\begin{align}
    &\mbb P\left[I^\star\neq \tilde{I}^\star|G\right]\leq \theta  \implies \mbb P[K\neq L|G] \leq \alpha. \nonumber
\end{align}
Thus:
\begin{align}
    \mbb P\left[  \mbb P[K\neq L|G] \leq \alpha          \right] &\geq \mbb P\left[\mbb P\left[ I^\star\neq \tilde{I}^\star|G     \right]\leq \theta\right] \nonumber \\
    &\geq 1-\eta. \nonumber
\end{align} 
Here, $\alpha$ is arbitrarily small for sufficiently large $n$.
This completes the direct proof.
\section{Conclusion}
In this paper, we have examined the problem of common randomness generation over slow fading channels for their practical relevance in many situations in wireless communications. The generated CR can be exploited in the identification scheme to improve the performance gain. We established a single-letter characterization of the outage CR capacity over slow fading channels with AWGN and with arbitrary state distribution using our characterization of its corresponding channel outage capacity. 
As a future work, it would be interesting to study the problem of CR generation over single-user MIMO slow fading channels since it is known that, compared  to  SISO  systems, point-to-point MIMO communication systems offer higher rates,  more reliability and resistance to interference.
Future research might also focus on studying the problem of CR generation over fast fading channels.

\section*{Acknowledgments} 
We thank the German Research Foundation (DFG) within the Gottfried Wilhelm Leibniz Prize under Grant BO 1734/20-1 for their support of H. Boche and M. Wiese.
Thanks also go to the German Federal Ministry of Education and Research (BMBF) within the national initiative for “Post Shannon Communication (NewCom)” with the project “Basics, simulation and demonstration for new communication models” under Grant 16KIS1003K for their support of H. Boche, R. Ezzine and with the project “Coding theory and coding methods for new communication models” under Grant 16KIS1005 for their support of C. Deppe.
Further, we thank the German Research Foundation (DFG) within Germany’s Excellence Strategy EXC-2111—390814868 and EXC-2092 CASA - 390781972 for their support of H. Boche and M. Wiese.
\bibliographystyle{IEEEtran}
\bibliography{definitions,references}

\end{document}